\newtheorem{theorem}{Theorem}[section]
\newtheorem{lemma}{Lemma}[section]
\title{Opportunistic Scheduling as Restless Bandits}
\author{Vivek S.\ Borkar, Gaurav S.\ Kasbekar, Sarath Pattathil,\\ Priyesh Y.\ Shetty$^*$\thanks{*Authors arranged in alphabetical order. VSB, GSK, SP are and PYS was with the Department of Electrical Engineering, IIT Bombay, Powai, Mumbai  400076, India. PYS is now with the Department of Electrical and Computer Engineering, University of California, Davis. Email: borkar.vs@gmail.com, gskasbekar@ee.iitb.ac.in, sarathpattathil@iitb.ac.in, pshetty@ucdavis.edu. Work of VSB was supported in part by a J.\ C.\ Bose Fellowship,  CEFIPRA grant No. IFC/DST-Inria-2016-01/448 ``Machine Learning for Network Analytics''  and a grant for `\textit{Approximation for high dimensional optimization and control problems}' from the Department of Science and Technology, Government of India. }}
\begin{document}




\maketitle
\begin{abstract}
In this paper we consider energy efficient scheduling in a multiuser setting where each user has a finite sized queue and there is a cost associated with holding  packets (jobs) in each queue (modeling the delay constraints).  The packets of each user need to be sent over a common channel. The channel qualities seen by the users are time-varying and differ across users. Also, the cost incurred, \emph{i.e.} energy consumed, in packet transmission is a function of the channel quality. We pose the problem as an average cost Markov Decision Problem and prove that this problem is Whittle Indexable. Based on this result we propose an algorithm in which the Whittle index of each user is computed and the user who has the lowest value is selected for transmission. We evaluate the performance of this algorithm via simulations and show that it achieves a lower average cost than the Maximum Weight Scheduling and Weighted Fair Scheduling strategies.
\end{abstract}

\section{Introduction}
\label{section_introduction}

Recently, there has been a tremendous growth in the deployment of wireless cellular networks around the world, including those based on the popular Long Term Evolution Advanced (LTE-A)~\cite{RF:ghosh:fundamentals:of:lte} standard. A key objective in the design of cellular networks is to \emph{minimize the data transmission delay}, especially that of real-time traffic such as audio or video calls and video streaming. Another important objective is to \emph{minimize the energy consumption} at mobile users and base stations (BS) in order to reduce the energy cost and adverse impact on the environment~\cite{RF:chen:green:wireless:networks}.

In this paper we study the fundamental problem of opportunistic scheduling in a multiuser setting with the objective of minimizing the delay and energy consumption. In this problem there are multiple users, each with a queue of packets  which need to be sent over a common channel. For example, the queues may correspond to different mobile users in a cell wanting to transmit to or receive from the BS over the uplink or downlink wireless channel respectively. The channel qualities seen by the users are time-varying, \emph{e.g.} due to multipath fading of the wireless channel, and differ across users. The energy consumed in packet transmission is a function of the channel quality. At any time at most one user may transmit on the channel because if multiple users were to transmit, there would be  interference. The problem is to select the user (queue) that transmits and to decide the number of packets that the selected queue transmits in each time slot so as to minimize the time-averaged cost, where the cost per slot is an increasing function of the energy consumed in packet transmission and of the delay incurred.

In the model in this paper, the energy required to transmit packets reliably over the channel is an increasing convex function of the rate of transmission, as is typically the case in practice~\cite{Tse}. The packets that are not transmitted by the scheduled user in a given time slot are retained in its queue, which causes delay. These delays can be reduced by transmitting a larger number of packets by using more power. Therefore there is a trade-off between the delay incurred in packet transmission and the energy consumed by transmitters. Note that the delay experienced by a packet is an increasing function of the number of packets ahead of it in its queue. Since our objective is to minimize packet delays, we include a term proportional to the queue length in the objective function, referred to as the ``\emph{holding cost}''. We formulate the problem as an average cost Markov Decision Process (MDP) and prove that it is Whittle indexable~\cite{Whittle}. We use this fact to decouple the problem into individual control problems for each user and propose an algorithm by which the Whittle index of each user is computed and the user who has the lowest value is selected for transmission. We evaluate the performance of this algorithm via simulations and show that it achieves a lower average cost than the Maximum Weight Scheduling and Weighted Fair Scheduling strategies.

We now briefly review related prior literature. A survey of  techniques for energy efficient scheduling with delay constraints in a wireless setting can be found in~\cite{Modiano}. The problem of energy efficient scheduling under delay constraints was first introduced in~\cite{Randall}. This paper studies the tradeoff between minimizing delay and minimizing transmit power for transmission over a block fading wireless channel. The problem is solved by a Markov decision formulation for which a Pareto optimal solution is obtained. The problem of scheduling under power constraints for a fixed deadline is formulated and an offline algorithm to solve it  is proposed in~\cite{Prabhakar}. In~\cite{Bacinoglu},  a similar problem over a finite horizon is formulated and an online heuristic algorithm to solve it is proposed. There are numerous other works (for example see~\cite{Agarwal} and the references therein) that generalize the arrival processes and channel states, and characterize the optimal power delay tradeoff curves. However, all these works deal with the single user case in which there is only one transmitter, whereas we study the multiuser case in this article.

Energy efficient scheduling with delay constraints in a multiuser setting has been explored in~\cite{Salodkar}. In the scheme proposed therein, each user solves a single user power-minimizing delay constrained scheduling problem and finds an optimal rate, which it communicates to the BS. The BS selects the user with the highest rate for transmission. The stability and optimality (in a suitable sense) of this algorithm have also been studied. In~\cite{Multiclass_Queue}, multiuser scheduling with a single server is considered when there are costs associated with holding jobs in each queue and there is a corresponding reward associated with transmission. The costs are similar to the holding costs in queues, which characterize delay requirements in our paper. The problem is formulated as an infinite horizon MDP and the difference of the net reward and the holding cost is maximized. In~\cite{Moghaddari},~\cite{Moghadari},  delay minimization under power constraints for uplink transmission in a multiuser wireless setting is studied. The problem is modeled as an average cost MDP, and an online stochastic approximation algorithm is proposed which is distributed, has low complexity, and converges to the optimal solution to the problem. In~\cite{Zhang}, the question of how the transmit power needs to increase as the delay requirement becomes stringent is studied. Also, the problem of minimizing the transmit power subject to a delay constraint that is in terms of the queue length decay rate is addressed for both  the single user as well as the  multiuser case. However, none of the above papers~\cite{Salodkar},~\cite{Multiclass_Queue},~\cite{Moghaddari},~\cite{Moghadari},~\cite{Zhang} show Whittle indexability of the respective opportunistic scheduling problems they address. To the best of our knowledge, this paper is the first to show Whittle indexability of the opportunistic scheduling problem in a multiuser setting with the objective of minimization of delay and energy consumption. The fact that this problem is Whittle indexable allows us to decouple the original multiuser average cost MDP which is difficult to solve directly, into more tractable individual control problems for each user. In particular, if each queue has an identical buffer size, then it is easy to see that the size of the state space grows exponentially in the number of queues for the original problem and linearly for the decoupled problems. For a precise hardness result for restless bandits, see \cite{Papa}. The decoupling leads to an efficient algorithm  for computation of Whittle indices. As we shall see, the Whittle index policy is empirically found to  outperform widely used heuristics such as the Maximum Weight Scheduling and Weighted Fair Scheduling strategies.

It should be kept in mind, however, that Whittle index policy is itself a heuristic, as the aforementioned decoupling is achieved by first relaxing the original problem to a more analytically amenable one (see Section~\ref{section_model} below). It is known to be optimal in an asymptotic sense in the `infinitely many bandits' limit \cite{Weber}. More importantly, it has been found to be very successful in many applications, see, e.g., \cite{ephemeral, Cloud, proc_sharing, Cowan, Gittins, Jacko, Larranaga, Liu, Nino, Ny, Raghu, Ruiz}. It is also worth noting that the specific problem considered here has a novel feature of being a combination of a restless\footnote{Note that in the problem addressed in this paper, the queue lengths of the queues that do not transmit in a given slot may change due to the arrival of packets; hence, this problem is an instance of the ``restless'' bandit problem~\cite{Whittle}.}  bandit (optimization over choice of bandits) and a conventional MDP  (optimization over number of packets to be transmitted).

The rest of this paper is organized as follows. In Section~\ref{section_model}, we describe the model and problem formulation and provide a review of the theory of Whittle index. In Section~\ref{section_DP}, we show that the optimization problem formulated in Section~\ref{section_model} gets decoupled into individual control problems for each queue and derive a dynamic programming equation for each queue. In Section~\ref{section_struc_prop_vfunc}, we show some  important structural properties of the value function and in Section~\ref{SSC:optimality:threshold:policy}, we show that the optimal policy for the individual control problems is a threshold policy. The properties proved in Section~\ref{SC:dynamic:programming:optimal:policy}   are then used in Section~\ref{section_whittle_index} to prove Whittle indexability of the above problem. In Sections~\ref{section_other_policies} and~\ref{section_simulations}, we present some other scheduling policies for the opportunistic scheduling problem and compare the proposed Whittle index based scheme with these policies via simulations. Finally, we conclude in Section~\ref{section_conclusions}.

\section{Model, Problem Formulation and Background}
\label{section_model}


There are a total of $L$ users, each with a queue of packets, wanting to transmit on a channel (see Figure~\ref{fig:Network_Model}). Time is divided into slots of equal duration. In any time slot, at most one user may transmit on the channel since if multiple users were to transmit simultaneously, their transmissions would interfere with each other. We study the scheduling problem of selecting the user (queue) that is \emph{active}, \emph{i.e.}, transmits, and deciding the number of packets that it transmits, in each time slot.   We consider Poisson arrivals into the queues, where arrivals into queue $i$ are i.i.d.\ Poisson with parameter $\Lambda_i$. When a queue is active, packets may arrive to and / or depart from it, whereas when a queue is \emph{passive}, \emph{i.e.}, does not transmit, packets may arrive to, but not depart from it.

The $i$th queue has a buffer size $M^i$. So, if this queue has $M^i$ packets, all arrivals to it until a packet from it departs are discarded.  Thus, the number of packets in the queue at any time is in the range $\{0, 1, \cdots, M^i\}$.

The per job (packet) holding cost in queue $i$ is $C^i$. By this, we mean that if there are $k$ jobs in queue $i$, the cost incurred in holding these jobs is $k C^i$. This cost models the delay requirement for a queue; in particular, more stringent the delay requirement\footnote{For example, the delay requirement of queues that store delay-sensitive traffic (\emph{e.g.}, voice, video) would be more stringent than those that store elastic traffic (\emph{e.g.}, file transfer).} of user $i$, higher would be the value of $C^i$.

We assume that the channel quality seen by each user is an irreducible finite Markov chain taking values in a discrete set of real numbers (which is tantamount to quantizing the possible values  thereof) and that the channel qualities of different users are independent\footnote{ The assumption that fading is independent across users would be a good approximation for a scenario where different users are situated at mutually far apart locations (e.g., the users may be mobiles in a macrocell); in this case, different users would experience different levels of path loss, shadow fading and multipath fading.}. The next channel state as seen by queue $i$ is governed by the transition kernel $q^i(\mu^i_n, dw)$, where $\mu^i_n$ is the current state of the channel for queue $i$ in time slot $n$. The states of the channel are such that, larger the value of the state, the more noisy the channel and therefore, the more the amount of power that is required for packet transmission. We assume that $q^i(a, dw)$ is First Order Stochastically Dominant (FSD) over $q^i(b, dw)$, when $a>b$. What this essentially means is that if the channel is in a noisy state in one time slot, the probability of being in a bad state in the next time slot is higher as compared to the probability of a good channel state moving to a bad one.

Let $X^1_n, X^2_n, \cdots, X^L_n$ denote the number of jobs that are present in time slot $n$ in queues  $1,2,\cdots ,L$ respectively.

\begin{figure}[h]
\begin{center}
\includegraphics[angle=0,scale=0.3]{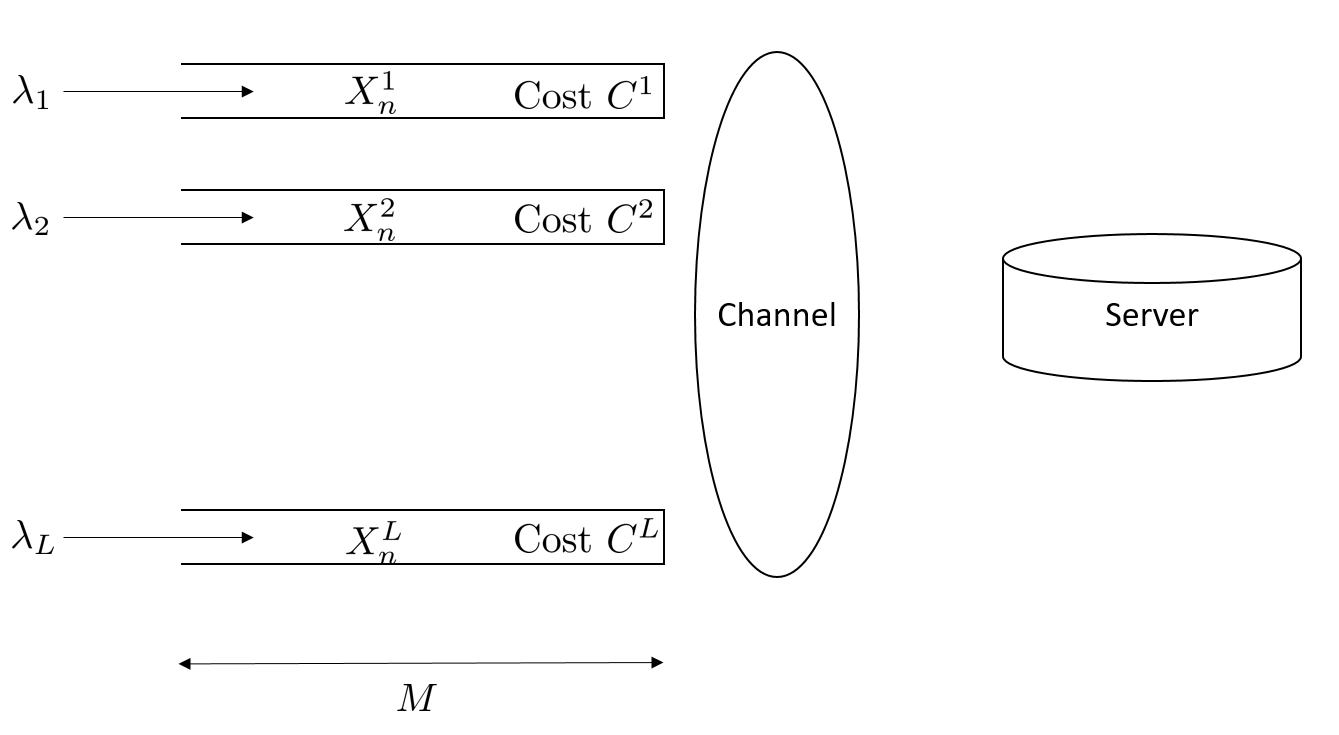}
\caption{The network model described in  Section~\ref{section_model}.}
\label{fig:Network_Model}
\end{center}
\end{figure}

\noindent The dynamics of queue $i$ are given by:
\begin{align}
X^i_{n+1} = (X^i_n - U_n^i Z^i_n + K^i_{n+1}) \wedge M^i
\label{Dynamics}
\end{align}

\noindent where  $K^i_{n+1}$ is the number of arrivals into queue $i$ in time slot $n + 1$ and  $U_n^i$ is a $\{0, 1\}$-valued control variable for queue $i$ with the interpretation that in time slot $n$, $U^i_n = 1 \Leftrightarrow$ the queue is active and $U_n^i = 0 \Leftrightarrow$  the queue is passive. $Z^i_n \in \{0, \cdots, X_n^i\}$ is the number of packets transmitted from queue $i$ in time slot $n$. Also, $a \wedge b$ denotes the minimum of $a$ and $b$. Since only one queue may transmit in any time slot, we have the following constraint:
\begin{align}
\sum_{i = 1}^{L} U_n^i \leq 1 \qquad \forall n
\end{align}
Let $f^i(z)$ be the ``energy cost'' associated with queue $i$ for transmitting $z$ packets; in particular, the cost of transmitting $z$ packets from queue $i$ when the channel state is $\mu$ is $\mu f^i(z)$. We assume that $f^i(\cdot)$ is a convex increasing function and $f^i(0) = 0$.

Our objective is to minimize the time-averaged cost; hence, the problem we address can be stated as:
\begin{align}
\min \lim_{N \uparrow \infty} &\frac{1}{N} \sum_{n=0}^{N-1} \sum_{j = 1}^{L} \mathbb{E}[U_n^j \mu_n^j f^j (Z_n^j) + C^jX_n^j] \label{EQ:objective:function} \\
&\text{s.t.}  \sum_{i = 1}^{L} U_n^i \leq 1 \qquad \forall n. \label{EQ:sum:Uni:eq:1}
\end{align}
The hard per stage constraint (\ref{EQ:sum:Uni:eq:1}) makes the problem hard \cite{Papa}. For this reason, Whittle introduced a relaxation of the per stage constraint (\ref{EQ:sum:Uni:eq:1}) by a time-averaged constraint
\begin{align}
\lim_{N \uparrow \infty} \frac{1}{N} \sum_{n=0}^{N-1} \sum_{j = 1}^{L} \mathbb{E}[U_n^j] \leq 1,
\end{align}
which is a significant relaxation of the former. In particular, an optimal strategy for the latter need not even be feasible for the former. The advantage of this drastic step is that now the constraint is of the same form, viz., time-averaged, as the cost (\ref{EQ:objective:function}). This makes it a classical `constrained MDP' \cite{Borkar}. This can be cast as an abstract linear program in terms of the so called `ergodic occupation measures' which facilitates the application of convex analysis techniques (\textit{ibid.}). Of relevance to us here  is the fact that classical  Lagrange multiplier formulation is now possible and leads to the following unconstrained problem:
\begin{align}
\min \lim_{N \uparrow \infty} \frac{1}{N} \sum_{n=0}^{N-1} \sum_{j = 1}^{L} \mathbb{E}[U_n^j \mu_n^j f^j (Z_n^j) + C^jX_n^j + (1-U_n^j)\lambda].
\end{align}
Here $\lambda$ is the Lagrange multiplier. Whittle's master stroke was to take away the identity of $\lambda$ as the Lagrange multiplier and view it as a negative subsidy or `tax' for passivity\footnote{negative because this is a cost minimization problem. The original Whittle formulation is for a reward maximization problem, we give here equivalent statements for a cost minimization problem. Also, we have replaced his equality constraint by an inequality constraint. The overall philosophy, however, is identical.}. The relaxed  problem has a separable cost and a separable constraint. Hence given $\lambda$, it decouples into individual control problems
\begin{align}
\min \lim_{N \uparrow \infty} &\frac{1}{N} \sum_{n=0}^{N-1} \mathbb{E}[U_n^j \mu_n^j f^j (Z_n^j) + C^jX_n^j] \label{EQ:objective:function2} \\
&\text{s.t.}  \ \  \lim_{N \uparrow \infty} \frac{1}{N} \sum_{n=0}^{N-1} \mathbb{E}[U_n^j] \leq 1 \label{EQ:sum:Uni:eq:2}
\end{align}
for each $j$. Whittle then defines \textit{indexability} (now called \textit{Whittle indexability}) as the property: the set of states that are passive under optimal policy decreases monotonically  from the whole state space to the empty set as $\lambda$ is increased monotonically from $-\infty$ to $+\infty$. \textit{If} the problem is Whittle indexable, then the  (Whittle) index is defined for each $j$ and state $(x, \mu)$ as the value $\lambda^j(x, \mu)$ of $\lambda$ for which both active and passive modes are equally desirable for the $j$th control problem (\ref{EQ:objective:function2})-(\ref{EQ:sum:Uni:eq:2}). (If this choice is not unique, we take the least such $\lambda$ in order to render it unambiguous. This will be implicitly assumed throughout what follows.) The control policy then is as follows:  in time slot $n$, arrange $\{\lambda^j(X_n^j, \mu^j_n)\}$ in decreasing order (any tie being resolved according to some fixed tie-breaking rule) and  then select the $j_n$'th queue for transmission, where  $j_n :=$ argmin$_j\lambda^j(X_n^j, \mu^j_n)$ if $\min_j \lambda^j(X_n^j, \mu^j_n) < 0$. However, if $\min_j \lambda^j(X_n^j) > 0$, we choose not to allow any queue to transmit.

If one were to treat this as a classical average cost constrained MDP, one can indeed decouple the problem into individual unconstrained control problems of minimizing
\begin{align}
\lim_{N \uparrow \infty} &\frac{1}{N} \sum_{n=0}^{N-1} \mathbb{E}[U_n^j \mu_n^j f^j (Z_n^j) + C^jX_n^j + \lambda^*(1 - U_n^j)]
\end{align}
where $\lambda^*$ is the Lagrange multiplier which needs a separate computation \cite{Borkar}. If one solves this problem, the possibility of more than one chain being active cannot be eliminated, because only on average the number of active bandits will be one. This situation is infeasible for the original problem.

\section{Dynamic Programming and Optimal Policy}
\label{SC:dynamic:programming:optimal:policy}
\subsection{The Dynamic Programming Equation}
\label{section_DP}

Given the value of $\lambda$,
 the optimization problem gets decoupled into individual control problems for each one of the queues separately.
Since the problem gets decoupled, we henceforth drop the superscript in each of the variables.
Each individual problem above is a classical average cost MDP.
The dynamic programming equation for each queue can be derived by a vanishing discount argument as in \cite{Agarwal} and is:
\begin{align}
V(x,\mu) = -\beta &+ Cx + \min [ \min_z (\mu f(z) +  \nonumber \\
&\int  \int V(y,w)  p_1(dy|z,x) q(\mu, dw) ), \nonumber \\
&\lambda + \int \int V(y,w)p_0 (dy|x)q(\mu, dw) )]. \label{EQ:DP}
\end{align}
Here,
 \begin{itemize}

 \item $\beta$ is the optimal value of the average cost problem,

 \item $p_1(\cdot|z,x)$ is the transition probability when the queue is active, there are $x$ jobs in the queue and $z$ jobs are being transmitted,

 \item $p_0(\cdot|x)$ is the transition probability when the queue is passive,  there are $x$ jobs in the queue and there are no transmissions.

 \end{itemize}

 Note that the event `all buffers become full at time $n$' has a non-zero probability. Thus this Markov chain has a `uni-chain' property, whence (\ref{EQ:DP}) uniquely specifies $\beta$ as the optimal cost and uniquely specifies $V$ up to an additive constant \cite{Puterman}. We render $V$ unique by adding the requirement $V(x_0, \mu_0) = \beta$ for a prescribed $(x_0, \mu_0)$.

In the following subsections, we prove some important structural properties of the value function $V(\cdot)$ in (\ref{EQ:DP}) and show that the optimal policy for the individual control problems is a threshold policy in the state variable with a threshold that depends on the channel state. That is, there is a function of the channel state taking values in the state space of the queue such that, if the current state of the queue is greater than or equal to the value of this function, then the queue is active, and passive if not.  This is used in Section~\ref{section_whittle_index} to prove Whittle indexability of the above problem. We closely follow the approach of \cite{Agarwal}, but include most key details in toto for sake of making this account reasonably self-contained.

\subsection{Monotonicity and Convexity of the Value Function}
\label{section_struc_prop_vfunc}

The key property we need is the convexity of $V$, proved below.

\begin{lemma}
\label{inc_arg}
$V(\cdot,\mu)$ is an increasing function for every fixed $\mu$.
\end{lemma}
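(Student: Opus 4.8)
The plan is to prove that $V(\cdot,\mu)$ is increasing by exploiting the fact that the dynamic programming equation (\ref{EQ:DP}) is the limit of finite-horizon value iteration, and showing that monotonicity is preserved at each step. Define $V_0 \equiv 0$ and, for $k \geq 0$, let $V_{k+1}(x,\mu)$ be the right-hand side of (\ref{EQ:DP}) with $V$ replaced by $V_k$ (ignoring the additive constant $-\beta$, which is immaterial for monotonicity). By the vanishing-discount / value-iteration convergence invoked to derive (\ref{EQ:DP}) (following \cite{Agarwal}, \cite{Puterman}), $V_k \to V$ up to the normalizing constant, so it suffices to show by induction on $k$ that $x \mapsto V_k(x,\mu)$ is nondecreasing for every fixed $\mu$. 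The base case is trivial since $V_0 \equiv 0$.

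For the induction step, assume $V_k(\cdot,\mu)$ is nondecreasing for every $\mu$, and fix $x \le x'$ and $\mu$. First I would handle the passive term: since $p_0(dy \mid x)$ corresponds to the dynamics $x \mapsto (x + K) \wedge M$ with $K$ the (state-independent) arrival count, a larger starting state $x'$ gives a queue length that is pathwise at least as large as that from $x$, hence the resulting distributions are stochastically ordered, and by the induction hypothesis $\int\int V_k(y,w)\, p_0(dy \mid x)\, q(\mu,dw) \le \int\int V_k(y,w)\, p_0(dy \mid x')\, q(\mu,dw)$; adding $\lambda$ and the common term $Cx \le Cx'$ preserves the inequality. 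For the active term, the key observation is that any transmission level $z \in \{0,\dots,x\}$ feasible at state $x$ is also feasible at $x' \ge x$, and for that same $z$ the post-transmission backlog $x - z$ is no larger than $x' - z$, so again the arrival-plus-truncation dynamics give stochastically ordered next-state distributions and $\int\int V_k(y,w)\, p_1(dy \mid z,x)\, q(\mu,dw) \le \int\int V_k(y,w)\, p_1(dy \mid z,x')\, q(\mu,dw)$. Hence for each $z \le x$ the bracketed active cost at $x$ is $\le$ the corresponding cost at $x'$ with the same $z$; taking minima over the (larger) feasible set at $x'$ can only lower its value further relative to restricting to $z \le x$, but we only need $\min_{z \le x'} (\cdots) \le$ is the wrong direction — instead note $\min_{z\le x}(\text{cost at }x) \le \min_{z\le x}(\text{cost at }x',\text{same }z)$ and then $\min_{z \le x'}(\text{cost at }x') \le \min_{z \le x}(\text{cost at }x')$ would go the wrong way, so I would argue directly: for the optimal $z^* \le x'$ at state $x'$, if $z^* \le x$ then the cost at $x$ with $z^*$ is no larger; if $z^* > x$, compare against $z = x$ at state $x$, using that $\mu f(x) \le \mu f(z^*)$ (since $f$ increasing) and that from state $x$ transmitting all $x$ packets leaves backlog $0 \le x' - z^*$, giving stochastically smaller next state. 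Either way the active value at $x$ is $\le$ that at $x'$. Taking the outer minimum of two quantities each of which is monotone, and adding $Cx \le Cx'$, completes the step.

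The main obstacle I anticipate is the bookkeeping in the active branch, precisely because the feasible set $\{0,\dots,x\}$ itself grows with $x$, so one cannot simply compare term-by-term under a common minimization; the clean way around this is the case split on whether the optimizer $z^*$ at the larger state exceeds $x$, together with the convexity/monotonicity of $f$ and $f(0)=0$ to dominate the $z^*>x$ case by the ``transmit everything'' action at the smaller state. A secondary technical point is justifying that value iteration from $V_0 \equiv 0$ converges (in the appropriate span-seminorm sense) to the solution $V$ of (\ref{EQ:DP}); this is exactly the vanishing-discount argument already cited in the derivation of (\ref{EQ:DP}), under the uni-chain property noted after (\ref{EQ:DP}), so I would simply invoke it. I would also remark that monotonicity in $x$ is robust to the normalization $V(x_0,\mu_0)=\beta$, since adding a constant does not affect the order.
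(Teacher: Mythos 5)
Your proof is correct, but it takes a genuinely different route from the paper's. The paper argues by pathwise coupling: it fixes the control processes $\{U_n\},\{Z_n\}$ and the arrival process $\{K_n\}$ on a common probability space, runs two copies of the dynamics (\ref{Dynamics}) from $x'>x$, observes that $X'_n \geq X_n$ for all $n$ so the per-stage costs are ordered, concludes $J^\alpha_{x'}(\{U_n\},\{Z_n\}) \geq J^\alpha_x(\{U_n\},\{Z_n\})$ for every control, takes minima to order the discounted value functions, and passes to the average-cost $V$ by the vanishing-discount argument. You instead show that the Bellman operator preserves monotonicity and induct over value iteration, which forces you to confront the state-dependence of the action set $\{0,\dots,x\}$; your case split on whether the optimizer $z^*$ at the larger state exceeds $x$ (dominating the excess case by the ``transmit everything'' action, using that $f$ is increasing and that the resulting backlog $0 \le x'-z^*$ gives a stochastically smaller next state) handles this correctly, and the passive branch is immediate from the pathwise ordering of $(x+K)\wedge M$. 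The coupling argument buys brevity, since comparing costs under a common control avoids the case analysis over different feasible sets (though strictly it too must address admissibility of the chosen control from the smaller initial state, e.g.\ by truncating $Z_n$ at $X_n$); your operator-level argument is more self-contained and transfers directly to other structural properties. One technical adjustment: run the induction on the finite-horizon \emph{discounted} value functions (as the paper does for convexity in Lemma~\ref{inc_diff}) rather than on iterates of the average-cost equation (\ref{EQ:DP}) with $-\beta$ in place, whose convergence is more delicate; monotonicity then passes to the infinite-horizon discounted value function and, via the vanishing-discount limit, to $V$, exactly as you indicate.
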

\begin{proof}
Let $f_\mu(.) = \mu f(.)$. Fix $\lambda$, the control processes $\{U_n\},\{Z_n\},$ and arrival process $\{K_n\}$ on a probability space and consider two state processes $\{X'_n\},\{X_n\}$ driven by these according to (\ref{Dynamics}) with initial conditions $X'_0 = x' > x =  X_0$. Then $X'_n > X_n \: \: \forall n$ and therefore,
\begin{align}
CX'_n + f_\mu(U_n Z_n) + (1 - U_n)\lambda > \nonumber \\
CX_n + f_\mu(U_n Z_n) + (1 - U_n)\lambda \: \: \forall n.
\end{align}
Let
\begin{align}
J_x^\alpha (\{U_n\},\{Z_n\}) := \mathbb{E}[\sum_{m=0}^{\infty}\alpha^m(CX_m \nonumber \\
+ f_\mu(U_mZ_m) + (1 - U_m)\lambda)]
\end{align}
denote the $\alpha$-discounted cost for initial condition  $x$ with the given control processes. (Here the expectation is taken on arrivals as well as the channel states.) Then
$$ J_{x'}^\alpha (\{U_n\},\{Z_n\})\geq J_x^\alpha (\{U_n\},\{Z_n\}).$$
Taking minimum over all control processes on both sides, the discounted value functions $V_\alpha(\cdot,\mu)$ satisfy $V_\alpha(x',\mu) \geq V_\alpha(x,\mu)$. Using the vanishing discount argument (see \cite{Agarwal}), the claim extends to average cost value function $V(\cdot,\mu)$.
\end{proof}

\begin{lemma}
\label{inc_diff_channel_state}
$V(x,\cdot)$ is increasing in the channel state for every fixed $x$.
\end{lemma}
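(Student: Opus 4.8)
The plan is to reduce to the discounted problem exactly as in the proof of Lemma~\ref{inc_arg}: I would show that the $\alpha$-discounted value function $V_\alpha(x,\cdot)$ is increasing in the channel state for every fixed $x$, and then invoke the vanishing discount argument of \cite{Agarwal} together with the fact that a pointwise limit of increasing functions is increasing, to pass from $V_\alpha$ to the average-cost value function $V$.

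For the discounted problem I would use a sample-path coupling argument parallel to Lemma~\ref{inc_arg}, but coupling the two \emph{channel} chains rather than the two queue processes. Fix $x$ and channel states $\mu' > \mu$. The hypothesis that $q(a,dw)$ is FSD over $q(b,dw)$ whenever $a>b$ is precisely stochastic monotonicity of the channel chain, so (e.g.\ via the quantile/Strassen coupling driven by common uniform variables) one can put on a single probability space two channel processes $\{\mu_n\}$ and $\{\mu'_n\}$, each evolving under the kernel $q$, with $\mu_0=\mu$, $\mu'_0=\mu'$, and $\mu_n \le \mu'_n$ for all $n$ almost surely. Now take any policy $\pi'$ for the chain started at $(x,\mu')$, and drive the queue started at $(x,\mu)$ using the realized control actions $(U_n,Z_n)$ of $\pi'$ together with the common arrival sequence $\{K_n\}$. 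Because the queue recursion (\ref{Dynamics}) does not involve the channel state, the two queue processes coincide pathwise, $X_n = X'_n$; in particular the copied actions stay feasible ($Z_n \le X_n = X'_n$), and the holding costs $CX_n$ and the passivity costs $(1-U_n)\lambda$ agree term by term. The only discrepancy is in the energy term, where $\mu_n f(Z_n) \le \mu'_n f(Z_n)$ since $\mu_n \le \mu'_n$ and $f\ge 0$. Hence the discounted cost of this (randomized, history-dependent) policy for $(x,\mu)$ is pathwise, and therefore in expectation, no larger than the cost of $\pi'$ for $(x,\mu')$; since $V_\alpha(x,\mu)$ is the infimum over all admissible policies and $\pi'$ was arbitrary, $V_\alpha(x,\mu)\le V_\alpha(x,\mu')$.

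The step that needs care is the coupling itself: justifying that the FSD assumption delivers a monotone coupling of the channel chain, and noting that it is legitimate to let the control for the $\mu$-chain use the coupled $\mu'$-chain as side information — which is fine because $V_\alpha$ is an infimum over all admissible policies, so any particular policy only yields an upper bound. An alternative, if one prefers to avoid coupling, is to check directly that the discounted dynamic programming operator maps the cone of functions increasing in $\mu$ into itself: $\mu f(z)$ is increasing in $\mu$, and $\int\int V(y,w)\,p_i(dy|\cdot)\,q(\mu,dw)$ is increasing in $\mu$ by applying the FSD property to the (inductively increasing) map $w\mapsto V(y,w)$, while the pointwise minimum and the minimization over $z$ both preserve monotonicity; value iteration from $V\equiv 0$ then gives the claim for $V_\alpha$, and the vanishing discount argument finishes the proof. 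Everything else — feasibility of the copied actions, coincidence of the queue paths, monotonicity of $\mu f(z)$, and the interchange of minimum with limit — is routine.
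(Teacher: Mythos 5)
Your proposal is correct and follows essentially the same route as the paper, whose proof of Lemma~\ref{inc_diff_channel_state} is only a one-line pointer: repeat the sample-path/coupling argument of Lemma~\ref{inc_arg} using the stochastic dominance (FSD) assumption on $q(\mu,dw)$, with details deferred to the proof of Theorem 2 in \cite{Agarwal}. You have simply filled in those details (the monotone coupling of the channel chains, the identical queue paths, and the vanishing-discount passage), and your alternative via monotonicity preservation of the DP operator is an equally valid variant.
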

\begin{proof}
The proof goes along the same lines as the previous lemma, along with the fact that the channel state transition probabilities satisfy the stochastic dominance condition. See Proof of Theorem 2 in \cite{Agarwal} for more details.
\end{proof}

This result indicates that one can prove structural properties in the channel state variable $\mu$ analogous to those for the queue state variable $x$. We do not do so because while the two jointly form the overall state of the dynamics under consideration, the channel state is uncontrolled. Further, as pointed out at the beginning of  section III.b, p.\ 1480, of \cite{Agarwal}, channel state under Markov fading is not conducive to the kind of structural results we obtained for queue state for solid technical reasons.  Hence we treat the channel state $\mu$ as a parameter and prove the structural properties of the value function in $x$ alone holding $\mu$ fixed. This leads to a Whittle index as a function of the queue state with additional dependence on the channel state treated as an extraneous parameter.

\begin{lemma}
\label{inc_diff}
$V(\cdot,\mu)$ is convex and has increasing differences for a fixed $\mu$, \emph{i.e.}, for $z>0, x>y,$ $$V(x+z,\mu) - V(x,\mu) \geq V(y+z,\mu) - V(y,\mu).$$
\end{lemma}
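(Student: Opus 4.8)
The plan is to follow the same two-step scheme as Lemmas~\ref{inc_arg} and~\ref{inc_diff_channel_state}: establish the inequality first for the $\alpha$-discounted value function $V_\alpha(\cdot,\mu)$ by a sample-path coupling argument, and then pass to the vanishing-discount limit $\alpha\uparrow 1$, exactly as in \cite{Agarwal}, to obtain it for $V(\cdot,\mu)$. On the integer state space $\{0,1,\dots,M\}$ the two conclusions of the lemma coincide: ``$V_\alpha(\cdot,\mu)$ convex'' just means that $x\mapsto V_\alpha(x+1,\mu)-V_\alpha(x,\mu)$ is nondecreasing, and the stated increasing-differences inequality for a general $z>0$ then follows by summing the $z=1$ case. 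Hence it suffices to prove, for every admissible $x$ (i.e.\ with $x-1\ge 0$ and $x+1\le M$),
\[
V_\alpha(x+1,\mu)+V_\alpha(x-1,\mu)\ \ge\ 2\,V_\alpha(x,\mu).
\]

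For this, fix $\mu$ and a common realization of the arrival stream $\{K_n\}$ and the channel stream $\{\mu_n\}$ with $\mu_0=\mu$. Let $\pi^{+},\pi^{-}$ be discounted-optimal policies started from $x+1$ and $x-1$, with queue-length processes $\{A_n\}$ and $\{B_n\}$, so that these realize $V_\alpha(x+1,\mu)$ and $V_\alpha(x-1,\mu)$. I would construct, causally and inductively in $n$, two \emph{admissible} policies for two ``inner'' copies both started from $x$, with queue processes $\{P_n\},\{Q_n\}$, activity indicators $U^{P}_n=U^{+}_n$ and $U^{Q}_n=U^{-}_n$, and transmitted amounts $(Z^{P}_n,Z^{Q}_n)$ taken to be a ``more balanced'' version of $(Z^{+}_n,Z^{-}_n)$ — same total, smaller spread — so as to preserve the invariant $P_n+Q_n=A_n+B_n$ for all $n$. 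Granting this, the three cost components are controlled: the holding cost $C\,x$ is linear, so equality of the running sums makes the holding-cost contributions of the inner and outer pairs identical; copying the activity indicators makes the $(1-U_n)\lambda$ subsidy terms identical; and \emph{convexity} of $f$ gives $f(Z^{P}_n)+f(Z^{Q}_n)\le f(Z^{+}_n)+f(Z^{-}_n)$ term by term, so the energy cost of the inner pair is no larger. Summing over $n$, discounting, and taking expectations, the discounted cost of copy $P$ plus the discounted cost of copy $Q$ is at most $V_\alpha(x+1,\mu)+V_\alpha(x-1,\mu)$; since each of these costs is at least $V_\alpha(x,\mu)$ by optimality, the displayed midpoint inequality follows, and the vanishing-discount limit then gives the lemma.

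The main obstacle is making the coupling construction go through in the presence of the finite buffer. A plain value-iteration argument on the dynamic programming operator does not work here, because that operator does not preserve convexity: already the passive term $x\mapsto\mathbb{E}\big[V_\alpha\big((x+K)\wedge M,\cdot\big)\big]$ need not be convex, since the truncation $(\cdot)\wedge M$ flattens the top increments. In the sample-path construction the same truncation is the delicate point — when an outer process hits the cap $M$ and loses an arrival while the corresponding (smaller) inner process does not, the running sums $P_n+Q_n$ and $A_n+B_n$ threaten to drift apart. Handling this needs a careful case split according to which post-transmission levels plus incoming arrivals exceed $M$, together with a choice of $(Z^{P}_n,Z^{Q}_n)$ that (i) is feasible, i.e.\ does not exceed the current queue length of the copy doing the transmitting, (ii) is compatible with having copied the activity indicators (packets can be shifted only between two copies that are both active), and (iii) restores the invariant after each such boundary event. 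Monotonicity of $V$ (Lemma~\ref{inc_arg}) and convexity of $f$ are the structural inputs that make this bookkeeping close; the remainder is the standard vanishing-discount machinery already used for the preceding lemmas.
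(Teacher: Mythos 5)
Your overall strategy is a genuinely different one from the paper's: you propose a sample-path interchange/coupling argument (run optimal policies from $x+1$ and $x-1$, synthesize two admissible policies from $x$ with the same total queue content and a ``more balanced'' split of transmissions, then invoke convexity of $f$), whereas the paper proves convexity by induction on the finite-horizon discounted value functions, embedding the state space in $[0,\infty)$ and writing the recursion \emph{without} the buffer truncation, then passing to the infinite-horizon and vanishing-discount limits. Your reduction of increasing differences to midpoint convexity on the integer lattice is fine, and your observation that the truncated passive operator $x\mapsto\mathbb{E}[V_\alpha((x+K)\wedge M,\cdot)]$ does not preserve convexity is a legitimate point (one the paper's own proof sidesteps rather than confronts).

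However, there is a genuine gap: the entire difficulty of your approach is concentrated in the boundary bookkeeping at the buffer cap, and you do not carry it out --- you only assert that ``a careful case split'' will restore the invariant. Worse, the failure goes in the wrong direction for your argument. Take $A_n=M$, $P_n=M-1$ and an arrival $k\ge 1$: then $A_{n+1}=M=P_{n+1}$ while $Q_{n+1}$ stays one above $B_{n+1}$, so $P_{n+1}+Q_{n+1}=A_{n+1}+B_{n+1}+1$ and the \emph{inner} holding cost now strictly exceeds the outer one; linearity of $Cx$, which you rely on for equality of holding costs, then works against you. Restoring the invariant by adjusting $(Z^P_n,Z^Q_n)$ is not obviously possible under your own constraints: transmissions only decrease the queue of an \emph{active} copy, you have pinned $U^P_n=U^+_n$ and $U^Q_n=U^-_n$, and in a slot where the relevant copy is passive no compensating transfer is available. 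Monotonicity of $V$ and convexity of $f$, which you cite as the structural inputs that ``make the bookkeeping close,'' do not address this surplus. As written, the proposal identifies the crux correctly but leaves it unproven, so the proof is incomplete at precisely its hardest step.
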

\begin{proof}
Let $f_\mu(.) = \mu f(.)$. For the purposes of this proof, we shall embed the state space in $[0,\infty)$, \emph{i.e.}, treat the non-negative integer valued process as an instance of a non-negative real valued process. However the departure process $\{ Z_n \}$ continues to be an integer valued process constrained to remain  in $[0, \lceil X_n\rceil]$ at time $n$. (The latter stipulation allows the state to go negative at times. This is an artifice of the relaxation to continuous state space which disappears once we restrict to the discrete state space.) The above dynamics makes sense for this scenario as well. We first establish convexity by induction for the finite horizon discounted problems, with discount factor $\alpha$. It is true for horizon $n=0$. Suppose it is true for horizon $n-1$. Let $u_1,z_1$ (resp., $u_2,z_2$) be the optimal decisions for $x_1$ (resp., $x_2$) for the $n$ horizon problem. Without loss of generality, $u_iz_i \leq x_i, i = 1,2$. Then
\begin{align}
V^n(x_i,\mu) = Cx_i + f_\mu(u_iz_i) + (1-u_i)\lambda \nonumber \\
+ \ \alpha\int \int_k V^{n-1}(x_i - u_iz_i + k,w) \xi(dk) q(\mu, dw), \nonumber \\
\qquad \qquad \qquad \qquad \qquad \qquad \qquad    i = 1,2
\end{align}
where $\xi(\cdot)$ is the distribution of arrivals into the system.
Hence
\begin{align}
\lefteqn{\frac{V^n(x_1,\mu) + V^n(x_2,\mu)}{2}} \nonumber \\
 &= C \bigg(\frac{x_1+x_2}{2} \bigg) + \frac{f_\mu(u_1z_1)+ f_{\mu}(u_2z_2)}{2} \nonumber \\
&+ \lambda \bigg( 1 - \frac{u_1 + u_2}{2}\bigg) \nonumber \\
&+ \alpha \int \int_k  \frac{1}{2}\bigg(V^{n-1}(x_1 - u_1z_1 + k, w) \nonumber \\
& \qquad \qquad \qquad + V^{n-1}(x_2 - u_2z_2 + k, w) \bigg) \xi (dk) q(\mu, dw) \nonumber \\
&\geq C \bigg(\frac{x_1+x_2}{2} \bigg) + \frac{f_\mu(u_1z_1)+f_\mu(u_2z_2)}{2} \nonumber \\
&+ \lambda \bigg( 1 - \frac{u_1 + u_2}{2}\bigg) \nonumber \\
&+ \alpha\int \int_k V^{n-1} \bigg( \frac{x_1 + x_2}{2} - \left\lceil\frac{u_1z_1 + u_2z_2}{2}\right\rceil + k, w \bigg) \nonumber \\
& \qquad \qquad \qquad \qquad \qquad \qquad \qquad \qquad \qquad \xi (dk) q(\mu, dw)  \nonumber \\
&\geq V^n \bigg( \frac{x_1 + x_2}{2}, \mu \bigg) \nonumber
\end{align}
by convexity of the functions $f, V^{n-1}$, Lemma~\ref{inc_arg}, and using the fact that
$$ 0 \leq \frac{u_1z_1 + u_2z_2}{2} \leq \frac{x_1 + x_2}{2}.$$
This proves convexity of the finite horizon problem. Convexity is preserved under pointwise convergence, so it follows for the  infinite horizon discounted problem by letting the time horizon go to infinity, and then for the average cost problem by the `vanishing discount' argument as in \cite{Agarwal}. Convexity implies increasing differences. Therefore $V(\cdot,\mu)$ has increasing differences. The function restricted to the non-negative integers will retain this property, thereby proving the lemma.
\end{proof}

\subsection{Optimality of Threshold Policy}
\label{SSC:optimality:threshold:policy}

A threshold policy is one where there is some threshold $x^*$ such that whenever the state of the system $x>x^*$, the optimal decision would be to go active (or passive) and if $x<x^*$, the optimal decision would be to go passive (or active).
The preceding lemma has the following important consequence.

\begin{lemma}
\label{inc_trans}
The map $$x \mapsto  \underset{z}{\operatorname{argmin}} (\mu f(z) + \int \int V(y,w) p_1(dy|z,x) q(\mu, dw))$$ is increasing for fixed $\mu$.
\end{lemma}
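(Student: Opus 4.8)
The plan is to show that the objective being minimized over $z$, namely
$$g(z,x,\mu) := \mu f(z) + \int\int V(y,w)\, p_1(dy\mid z,x)\, q(\mu,dw),$$
has decreasing differences jointly in $(z,x)$ (equivalently, increasing differences in $(z,-x)$, or the correct sign so that the minimizer is nondecreasing in $x$). Once this submodularity/supermodularity property is in hand, the conclusion follows from the standard monotone-comparative-statics result (Topkis's theorem): the set of minimizers of a function with the appropriate single-crossing-in-differences property is monotone in the parameter, and taking the argmin (with the stated least/greatest convention) yields a genuine monotone selection. So the whole lemma reduces to establishing the right inequality on second differences of $g$.

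First I would make the dynamics under $p_1$ explicit. When the queue is active with $x$ jobs and transmits $z$, the post-decision backlog is $x-z$ (with $0 \le z \le x$), and then $K$ new arrivals come in and the queue is capped at $M$, so the next state is $(x - z + K)\wedge M$. Hence
$$g(z,x,\mu) = \mu f(z) + \int\int V\big((x - z + k)\wedge M,\, w\big)\,\xi(dk)\, q(\mu,dw).$$
The key observation is that $g$ depends on $x$ and $z$ only through the combination $x-z$ in the continuation term, while the $\mu f(z)$ term does not involve $x$ at all. Write $h(x-z,\mu) := \int\int V((x-z+k)\wedge M,w)\,\xi(dk)\,q(\mu,dw)$. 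By Lemma~\ref{inc_diff}, $V(\cdot,\mu)$ is convex; convexity is preserved under the truncation $(\cdot)\wedge M$ composed from the outside? — one must be slightly careful here, so I would instead argue that $y\mapsto V(y\wedge M,\mu)$ is still convex on the relevant range because $V$ is nondecreasing (Lemma~\ref{inc_arg}) and convex, and $\min(\cdot,M)$ is concave nondecreasing, so... this composition needs the right monotonicity bookkeeping. Granting that $h(\cdot,\mu)$ is convex in its first argument, the cross-difference of $g$ in $(z,x)$ is exactly the negative of a second difference of the convex function $h$: increasing $x$ by one unit and $z$ by one unit leaves $x-z$ unchanged, so $g(z{+}1,x{+}1,\mu)-g(z{+}1,x,\mu) = \mu f(z{+}1) + h(x-z,\mu) - \mu f(z{+}1) - h(x{+}1-z,\mu) = h(x-z,\mu) - h(x{+}1-z,\mu)$, and comparing with the same quantity at $z$ gives a second difference of $h$, which is $\ge 0$ by convexity. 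Unwinding signs, this says precisely that $g$ has the submodularity needed for the minimizer to be nondecreasing in $x$.

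The main obstacle I anticipate is \emph{not} the comparative-statics machinery but the handling of the buffer truncation $\wedge M$ and, relatedly, the boundary of the feasible set $\{0\le z\le x\}$ for $z$: the constraint set itself moves with $x$, so the clean "depends only on $x-z$" structure must be combined with an argument (e.g. extending $V$ suitably beyond the grid, as was done via the continuous-state embedding in the proof of Lemma~\ref{inc_diff}, or checking the corner cases $z=0$ and $z=x$ directly) to ensure the monotone selection respects feasibility. I would resolve this by noting that $V$ (extended convexly and monotonically, with the convention that arrivals beyond $M$ are discarded, i.e. the state saturates at $M$) makes $h(\cdot,\mu)$ convex and nondecreasing on all of $[0,\infty)$, so the unconstrained minimizer of $g(\cdot,x,\mu)$ is nondecreasing in $x$; then a short separate check shows that intersecting with the expanding constraint interval $[0,x]$ preserves monotonicity of the argmin. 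With the least-minimizer tie-breaking convention already adopted in the paper, this pins down a single increasing map $x\mapsto \operatorname{argmin}_z g(z,x,\mu)$, completing the proof.
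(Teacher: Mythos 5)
Your proof is correct and follows essentially the same route as the paper's: the paper defines $h_\mu(z,x)=\mu f(z)+\int\int_k V(x-z+k,w)\,\xi(dk)\,q(\mu,dw)$, derives its submodularity in $(z,x)$ directly from the increasing-differences property of Lemma~\ref{inc_diff} (which is exactly the convexity of your $h$ along the $x-z$ direction), and then invokes Topkis's theorem (Theorem 10.7 of \cite{Sundaram}). The buffer truncation $\wedge M$ and the moving constraint set $[0,x]$ that you flag are not treated in the paper's proof either (it works with the untruncated argument $x-z+k$ via the continuous-state embedding introduced in Lemma~\ref{inc_diff}); note only that your one displayed cross-difference has an index slip ($g(z{+}1,x,\mu)$ involves $h(x-z-1,\mu)$, not $h(x{+}1-z,\mu)$), though the intended second-difference-of-$h$ argument is the correct one.
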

\begin{proof}
Let $z' \geq z$, $x' \geq x$ and $z',z \leq x$. From the increasing differences property (Lemma~\ref{inc_diff}) we have that $\forall \mu$:
\begin{align}
&V(x' - z + k,\mu) - V(x' - z' + k,\mu) \nonumber \\
& \qquad \qquad \qquad \geq  V(x - z + k,\mu) - V(x - z' + k,\mu).
\end{align}
This gives us:
\begin{align}
\int \int_k [ V(x' - z + k,w) - V(x' - z' + k,w) ] \xi(dk) q(\mu, dw) \nonumber \\
\geq \int \int_k [V(x - z + k,w) - V(x - z' + k,w)]\xi(dk) q(\mu, dw).
\label{eq:diff}
\end{align}
Define:
$$h_\mu(z,x)= \mu f(z) + \int \int_k V(x - z + k,w) \xi(dk) q(\mu, dw). $$
Using this definition of $h_\mu(z,x)$ and equation (\ref{eq:diff}), we have
\begin{align}
h_\mu(z',x') - h_\mu(z,x') \leq h_\mu(z',x) - h_\mu(z,x).
\label{dec:diff}
\end{align}
This shows that $h_\mu(z,x)$ is a submodular function or in other words $-h_\mu(z,x)$ is a supermodular function. We also have:
\begin{align*}
& \underset{z}{\operatorname{argmin}} (\mu f(z) + \int \int V(y,w) p_1(dy|z,x) q( \mu, dw)) \nonumber \\
& \qquad \qquad \qquad = \underset{z}{\operatorname{argmin}} \: h_\mu(z,x) \\
& \qquad \qquad \qquad = \underset{z}{\operatorname{argmax}} \: -h_\mu(z,x).
\end{align*}
Using Theorem 10.7, Pg 259 \cite{Sundaram}, we get the desired result.
\end{proof}

\begin{lemma}
\label{lemma_threshold}
The optimal policy is a threshold policy.   That is, for each fixed $\mu$, $\exists$ a threshold $x^*$ such that if $x \geq x^*$ (respectively, $x < x^*$), it is optimal to transmit (respectively, not transmit) in state $x$.
\end{lemma}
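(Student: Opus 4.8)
The plan is to read the threshold directly off the two branches of the dynamic programming equation~(\ref{EQ:DP}). Denote by $A(x,\mu)$ and $P(x,\mu)$ the ``active'' and ``passive'' branches of the right-hand side of~(\ref{EQ:DP}), so that $V(x,\mu)=-\beta+Cx+\min\{A(x,\mu),P(x,\mu)\}$ and it is optimal to transmit in state $x$ precisely when $A(x,\mu)\le P(x,\mu)$. Using the dynamics~(\ref{Dynamics}), both branches are built from the single function
$$
\Psi(a):=\int \int V\big((a+k)\wedge M,w\big)\,\xi(dk)\,q(\mu,dw),
$$
namely $P(x,\mu)=\lambda+\Psi(x)$ and $A(x,\mu)=\min_{0\le z\le x}\big(\mu f(z)+\Psi(x-z)\big)$. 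The goal is to prove that $g(x,\mu):=P(x,\mu)-A(x,\mu)$ is nondecreasing in $x$ for each fixed $\mu$. Granting this, $\{x:g(x,\mu)\ge 0\}$ --- the set of states in which transmitting is optimal --- is an up-set, \emph{i.e.} of the form $\{x\ge x^*(\mu)\}$, where $x^*(\mu):=\min\{x:g(x,\mu)\ge 0\}$ (interpreted as $0$ when $g(\cdot,\mu)\ge 0$ everywhere, which happens for instance when $\lambda\ge 0$ since then $z=0$ gives $A(x,\mu)\le\Psi(x)=P(x,\mu)-\lambda\le P(x,\mu)$, and as ``$M+1$'' when the set is empty). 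This is exactly the asserted threshold structure.

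For the monotonicity, fix $x$ and let $z^*=z^*(x)\in\{0,\dots,x\}$ attain the minimum in $A(x,\mu)=\mu f(z^*)+\Psi(x-z^*)$. Since $z^*\le x\le x+1$, transmitting the same amount $z^*$ is feasible in state $x+1$, so $A(x+1,\mu)\le\mu f(z^*)+\Psi(x+1-z^*)$, whence $A(x+1,\mu)-A(x,\mu)\le\Psi(x+1-z^*)-\Psi(x-z^*)$; subtracting,
$$
g(x+1,\mu)-g(x,\mu)\ \ge\ \big(\Psi(x+1)-\Psi(x)\big)-\big(\Psi(x+1-z^*)-\Psi(x-z^*)\big)\ \ge\ 0,
$$
the last inequality because $\Psi$ is convex and $x-z^*\le x$. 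Convexity of $\Psi$ is inherited from convexity of $V(\cdot,w)$ (Lemma~\ref{inc_diff}), which is preserved by the shift by $k$ and by averaging against the arrival law $\xi$ and the channel kernel $q(\mu,\cdot)$; as in the proof of Lemma~\ref{inc_diff}, this is carried out on the state space embedded in $[0,\infty)$, so that the cap $\cdot\wedge M$ --- which would destroy convexity by composing $V$ with a concave map --- is retained only inside the feasibility constraint $z\le x$, and one restricts to the integer lattice at the end. Thus $g(\cdot,\mu)$ is nondecreasing, so for $x<x^*(\mu)$ we have $g(x,\mu)<0$ and not transmitting is strictly optimal, while for $x\ge x^*(\mu)$ we have $g(x,\mu)\ge 0$ and transmitting is optimal.

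The one genuine idea is the comparison in the second step: naively pitting the optimal action at $x$ against the optimal action at $x+1$ does not yield a monotone bound, and the point is that in the active mode state $x$ can be mimicked from state $x+1$ by transmitting the same number of packets, which collapses the whole comparison to a single increasing-difference inequality for $\Psi$. The remaining technical nuisance, and the step I would be most careful about, is the buffer cap $\cdot\wedge M$; it is handled exactly as in Lemma~\ref{inc_diff} through the $[0,\infty)$ embedding. Note that the monotonicity of the optimal transmission rate (Lemma~\ref{inc_trans}) is not needed for the threshold structure itself --- only for describing how much to transmit once one is inside the ``transmit'' region.
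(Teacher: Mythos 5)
Your proof is correct and is essentially the paper's own argument: you bound the active branch at $x+1$ by reusing the action that is optimal at $x$, and then invoke the increasing-differences (convexity) property of $V$ from Lemma~\ref{inc_diff} to show the active-versus-passive advantage is monotone in $x$. The only (accurate) departure is your observation that Lemma~\ref{inc_trans} is not actually needed for this comparison, whereas the paper cites it; the cited fact $z_2\ge z_1$ plays no role in the paper's chain of inequalities either.
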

\begin{proof}
Define
\begin{align*}
g(x,\mu) &= f_\mu(z_1) \ + \\
& \qquad \mathbb{E}[V(x - z_1 + \xi ,w)] - \mathbb{E}[V(x + \xi,w)]
\end{align*}
where $z_1$ is the optimal number of departures for $x$ when the channel state is  $\mu$. The next arrival  is denoted by $\xi$ and the next channel state is denoted by $w$. Here, we assume the channel state $\mu$ is fixed. Expectation is taken over the next channel state and arrival.
We will show that $x \mapsto g(x,\mu)$ is a decreasing function, or equivalently $g(x+1,\mu)- g(x,\mu) \leq 0$. The result will then follow from (\ref{EQ:DP}).

Let $z_2$ be the optimal number of departures for $(x+1)$ (for channel state $\mu$). We have $z_2 \geq z_1$ from Lemma \ref{inc_trans}. Consider the following:
\begin{align}
g(x+1,\mu) &= f_\mu(z_2) + \mathbb{E}[V((x+1) - z_2 + k,w )] \nonumber \\
& \qquad \qquad \qquad - \mathbb{E}[V((x+1) + k,w)] \nonumber \\
&\leq^{*1} f_\mu(z_1) + \mathbb{E}[V((x+1) - z_1 + k,w )] \nonumber \\
& \qquad \qquad \qquad - \mathbb{E}[V((x+1) + k,w)] \nonumber \\
&\leq^{*2} f_\mu(z_1) - \{ \mathbb{E}[V(x + k ,w)] \nonumber \\
& \qquad \qquad \qquad - \mathbb{E}[V(x - z_1 + k,w)] \} \nonumber \\
&= g(x,\mu). \nonumber
\end{align}
Note that $*1$ follows from the definition of $z_2$ and
$*2$ is a direct consequence of Lemma~\ref{inc_diff}.
\end{proof}

For later use, we also prove the following result wherein we write $V$ as $V_{\lambda}$ to render explicit its dependence on $\lambda$.
\begin{lemma}\label{lambda}
The map $\lambda \mapsto V_{\lambda}(x,\mu)$ is concave increasing for fixed $x, \mu$. In particular, it is  continuous.
\end{lemma}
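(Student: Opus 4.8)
The plan is to work first with the $\alpha$-discounted value functions $V_{\lambda,\alpha}(x,\mu)$ and establish concavity and monotonicity in $\lambda$ there, then pass to the limit. Monotonicity is the easy half: in the cost functional
\[
J^\alpha_{x,\mu}(\{U_n\},\{Z_n\}) = \mathbb{E}\Big[\sum_{m=0}^\infty \alpha^m\big(CX_m + f_\mu(U_mZ_m) + (1-U_m)\lambda\big)\Big],
\]
the dependence on $\lambda$ enters only through the term $\lambda\,\mathbb{E}[\sum_m \alpha^m (1-U_m)]$, and the coefficient $\mathbb{E}[\sum_m \alpha^m(1-U_m)] \ge 0$ is nonnegative for every admissible policy. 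Hence for each fixed policy, $J^\alpha_{x,\mu}$ is an affine, nondecreasing function of $\lambda$, and $V_{\lambda,\alpha}(x,\mu) = \inf_{\text{policies}} J^\alpha_{x,\mu}$ is a pointwise infimum of affine nondecreasing functions of $\lambda$, hence concave and nondecreasing in $\lambda$.

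Next I would transfer both properties to the average-cost value function $V_\lambda(\cdot,\mu)$ via the vanishing-discount argument already invoked repeatedly in the paper (as in \cite{Agarwal}). Concretely, with the normalization $V_{\lambda,\alpha}(x_0,\mu_0)$ subtracted off, one has $V_\lambda(x,\mu) = \lim_{\alpha\uparrow 1}\big(V_{\lambda,\alpha}(x,\mu) - V_{\lambda,\alpha}(x_0,\mu_0)\big)$ along a suitable subsequence; since each $V_{\lambda,\alpha}(\cdot,\mu) - V_{\lambda,\alpha}(x_0,\mu_0)$ is a \emph{difference} of two concave functions of $\lambda$ it is not automatically concave, so here care is needed — I would instead argue that $V_{\lambda,\alpha}(x,\mu)$ itself, \emph{before} normalization, converges (up to the usual additive constant) and that concavity/monotonicity, being preserved under pointwise limits, pass to $V_\lambda(x,\mu)$; the additive normalization constant does not affect concavity in $\lambda$. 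Continuity on the open interval then follows for free, since a finite concave function on $\mathbb{R}$ is automatically continuous.

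The main obstacle is precisely this interchange of the $\lambda$-concavity with the vanishing-discount limit: one must ensure that the relative value functions are obtained as genuine pointwise limits (not merely limits of normalized differences) so that concavity in $\lambda$ survives. This is handled by the uni-chain property noted after (\ref{EQ:DP}) — the event that all buffers fill up has positive probability from every state, giving a uniform bound on the span of $V_{\lambda,\alpha}$ and hence equicontinuity in $\lambda$ on compact $\lambda$-intervals — which legitimizes extracting a convergent subsequence whose limit inherits concavity and monotonicity. Everything else is routine: affineness of the per-policy cost in $\lambda$, stability of concavity under infima and under pointwise limits, and the standard fact that finite convex/concave functions on an interval are continuous.
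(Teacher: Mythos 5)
Your proposal is correct and follows essentially the same route as the paper's own proof: for each fixed policy the discounted cost is affine and nondecreasing in $\lambda$ (the coefficient of $\lambda$ being $\mathbb{E}[\sum_m \alpha^m(1-U_m)]\ge 0$), so the value function, as an infimum of such functions, is concave increasing, and this passes to the vanishing-discount limit. The extra care you take over the $\lambda$-dependent normalization constant is a point the paper's one-line argument glosses over, and your resolution is sound; note also that in the only place the lemma is subsequently used (continuity of the map $F$ in the index computation) the value function enters through differences $V_\lambda(x,\mu)-V_\lambda(y,w)$, in which any additive normalization cancels, so continuity of each concave summand already suffices there.
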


\begin{proof}
For the discounted cost problem with a fixed control process, it is easy to see that the cost is linear increasing in $\lambda$. The value function, being the minimum thereof over all control processes, will be concave increasing. Concavity and monotonicity is preserved in the vanishing discount limit, proving the claim.
\end{proof}

\section{Whittle indexability and Computation of the Whittle Index}
\label{section_whittle_index}

\subsection{Whittle Indexability}
\begin{theorem}
\label{whittle_index}
The above problem is Whittle indexable, where we parametrize the Whittle index by the channel state $\mu$.
\end{theorem}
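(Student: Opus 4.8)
The plan is to establish Whittle indexability by showing that the set of states $(x,\mu)$ that are passive under the optimal policy shrinks monotonically as $\lambda$ increases. Since Lemma~\ref{lemma_threshold} already tells us that for each fixed $\mu$ the optimal policy is a threshold policy --- passive on $\{0,1,\dots,x^*(\lambda,\mu)-1\}$ and active on $\{x : x \ge x^*(\lambda,\mu)\}$ --- the passive set is completely determined by the threshold $x^*(\lambda,\mu)$. Thus the theorem reduces to proving that, for each fixed $\mu$, the threshold $x^*(\lambda,\mu)$ is nondecreasing in $\lambda$: a larger tax for passivity should never enlarge the passive set. First I would make the passive/active comparison explicit. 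Being passive in state $x$ is at least as good as being active precisely when
\begin{align}
\lambda + \int\!\!\int V_\lambda(y,w)\,p_0(dy|x)q(\mu,dw) \le \min_z\Big(\mu f(z) + \int\!\!\int V_\lambda(y,w)\,p_1(dy|z,x)q(\mu,dw)\Big), \nonumber
\end{align}
i.e.\ passivity is optimal in state $x$ iff $G_\lambda(x,\mu) := \big(\text{RHS}\big) - \big(\text{LHS without }\lambda\big) \le \lambda$, where $G_\lambda(x,\mu)$ is (up to sign) the quantity $g(x,\mu)$ from the proof of Lemma~\ref{lemma_threshold} with its $\lambda$-dependence restored. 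Lemma~\ref{lemma_threshold}'s argument shows $x \mapsto G_\lambda(x,\mu)$ is nondecreasing (this is exactly the monotone-threshold content), so the passive set is the sublevel set $\{x : G_\lambda(x,\mu) \le \lambda\}$.

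The key step is then a sensitivity-in-$\lambda$ estimate: I want to show that $G_\lambda(x,\mu)$, viewed as a function of $\lambda$, grows no faster than $\lambda$ itself --- more precisely, that $\lambda \mapsto G_\lambda(x,\mu) - \lambda$ is nonincreasing. Granting this, if $(x,\mu)$ is passive at some $\lambda_1$ (so $G_{\lambda_1}(x,\mu) - \lambda_1 \le 0$), then for any $\lambda_2 > \lambda_1$ we get $G_{\lambda_2}(x,\mu) - \lambda_2 \le 0$ as well, so $(x,\mu)$ stays passive; combined with the threshold structure this gives monotone shrinkage of the passive set from the full state space (as $\lambda \to -\infty$) to the empty set (as $\lambda \to +\infty$), the two limiting cases being handled by a direct argument (for $\lambda$ very negative the $+\lambda$ term inside the active branch's complement makes passivity dominate everywhere; for $\lambda$ very large passivity is never chosen). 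To get the sensitivity estimate I would lean on Lemma~\ref{lambda}: $\lambda \mapsto V_\lambda(x,\mu)$ is concave and increasing, and one expects its slope in $\lambda$ to be at most $1$ (heuristically, $\partial V_\lambda/\partial\lambda$ equals the long-run fraction of time spent passive, which is $\le 1$). Since $G_\lambda$ is a difference of two expectations of $V_\lambda$ over the post-decision states, a uniform slope bound of $1$ on $V_\lambda$ transfers to a slope bound of $1$ on $G_\lambda$ (the two transition kernels $p_1$ and $p_0$ are each probability kernels), which is exactly what is needed.

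The main obstacle I anticipate is making the "slope at most $1$" claim rigorous without differentiating, since $V_\lambda$ need not be differentiable in $\lambda$ and the optimal threshold can jump. I would handle this by arguing directly at the level of discounted value functions: fix $\lambda_2 > \lambda_1$, take an optimal (threshold) policy for $\lambda_2$, and run it under tax $\lambda_1$; the cost decreases by at most $(\lambda_2-\lambda_1)$ times the expected discounted number of passive slots, which is at most $(\lambda_2-\lambda_1)\sum_m \alpha^m = (\lambda_2-\lambda_1)/(1-\alpha)$ in the discounted case and at most $(\lambda_2-\lambda_1)$ per unit time in the average-cost case. This yields $V_{\lambda_2} - V_{\lambda_1} \le$ (fraction-of-passive-time) $\cdot(\lambda_2-\lambda_1) \le \lambda_2-\lambda_1$ after the vanishing-discount passage, and the reverse inequality $V_{\lambda_2} \ge V_{\lambda_1}$ is already in Lemma~\ref{lambda}. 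Feeding this into the expression for $G_\lambda$ and using that $G_\lambda$ itself contains an explicit $+\lambda$-free active branch against a $\lambda$-free passive branch (so the only $\lambda$-dependence is through $V_\lambda$), I obtain $0 \le G_{\lambda_2}-G_{\lambda_1} \le \lambda_2-\lambda_1$, completing the monotonicity of the passive set and hence Whittle indexability. A minor additional point to check is that the threshold, and therefore the index $\lambda^j(x,\mu)$ defined as the indifference value of $\lambda$, is well defined (finite) for every state, which follows from $G_\lambda(x,\mu)$ being finite and the sandwiching just established.
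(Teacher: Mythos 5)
Your route is genuinely different from the paper's, but it has a gap at its central step. The paper does not use any quantitative sensitivity of the value function in $\lambda$: it assumes $x^*(\lambda') > x^*(\lambda)$ for $\lambda' > \lambda$, writes the indifference identity $\lambda = \min_z[\mu f(z) + \int\!\!\int (V(x^*(\lambda)-z+k,w) - V(x^*(\lambda)+k,w))\,\xi(dk)q(\mu,dw)]$ at each threshold, and uses the increasing--differences property of $V(\cdot,\mu)$ (Lemma~\ref{inc_diff}) to push the first identity from $x^*(\lambda)$ up to $x^*(\lambda')$ and reach the contradiction $\lambda' > \lambda'$. Your argument instead tries to prove the uniform Lipschitz bound $G_{\lambda_2}(x,\mu) - G_{\lambda_1}(x,\mu) \le \lambda_2 - \lambda_1$ via $0 \le V_{\lambda_2} - V_{\lambda_1} \le \lambda_2 - \lambda_1$. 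That last estimate is not justified in the average-cost setting. Your own discounted computation gives $V^\alpha_{\lambda_2} - V^\alpha_{\lambda_1} \le (\lambda_2-\lambda_1)/(1-\alpha)$, which blows up in the vanishing-discount limit; the quantity that survives that limit with a bound of $\lambda_2-\lambda_1$ is the optimal \emph{average cost} $\beta_\lambda$, not the \emph{relative} value function $V_\lambda(x,\mu)$ that actually enters $G_\lambda$. What you would really need is a bound on the oscillation of $V_{\lambda_2} - V_{\lambda_1}$ across states, i.e., that the expected (discounted) passive occupation times from two different initial states differ by at most one slot. That is neither proved nor obviously true: a chain started deep in the passive region can accumulate many more passive slots than one started above the threshold before the two couple. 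Without this, the claim that $\lambda \mapsto G_\lambda(x,\mu) - \lambda$ is nonincreasing is unsupported, and the whole argument collapses. (The additional claim $G_{\lambda_2} \ge G_{\lambda_1}$ also does not follow from pointwise bounds on $V_{\lambda_2}-V_{\lambda_1}$, since $G$ is a difference of two such terms, but that part is not needed.)

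There is also a consistent sign confusion that you should fix even if the estimate were repaired. With your definition $G_\lambda = (\text{active cost}) - (\text{passive continuation})$, passivity is optimal iff $\lambda \le G_\lambda(x,\mu)$, not $G_\lambda \le \lambda$; and by the proof of Lemma~\ref{lemma_threshold} the map $x \mapsto G_\lambda(x,\mu)$ is \emph{decreasing} (that is why large states are active), so the passive set is a superlevel set, and the threshold $x^*(\lambda,\mu)$ must be shown \emph{nonincreasing} in $\lambda$. As literally written, your deduction ``passive at $\lambda_1$ implies passive at $\lambda_2 > \lambda_1$'' asserts that the passive set \emph{grows} with $\lambda$, which is the opposite of Whittle indexability; the correct version of your scheme would show that \emph{active} states remain active as $\lambda$ increases. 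These flips are repairable, but the missing oscillation bound in the previous paragraph is the substantive obstruction; the paper's threshold-indifference argument is precisely what lets one bypass it, using only the convexity/increasing-differences structure in $x$ together with continuity in $\lambda$ from Lemma~\ref{lambda}.
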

\begin{proof}
Fix the channel state to be $\mu$. We suppress the $\mu$-dependence of optimal thresholds in what follows for  notational ease. Let $\lambda' > \lambda$ and the corresponding optimal thresholds (which exist by Lemma~\ref{lemma_threshold}) be $x^*(\lambda'), x^*(\lambda)$ respectively. Suppose $x^*(\lambda') > x^*(\lambda)$. We have:
\begin{align}
\lambda &= \min_{z} \big[ \mu f(z) + \int \int_k \big( V(x^*(\lambda) - z + k,w) \nonumber \\
& \quad \qquad - V(x^*(\lambda)+ k,w) \big) \xi(dk) q(\mu ,dw) \big]  \nonumber \\
&= \mu f(z_{\lambda}(x^*(\lambda))) + \int \int_k \big( V(x^*(\lambda) - z_{\lambda}(x^*(\lambda)) + k,w) \nonumber \\
& \qquad \qquad \qquad - V(x^*(\lambda)+ k,w) \big) \xi(dk) q(\mu ,dw) \nonumber
\end{align}
where $z_{\lambda}(x^*(\lambda))$ is the optimal transmission from state $x^*(\lambda)$.
\newline
\newline
Since $\lambda' > \lambda$, we have:
\begin{align}
\lambda' &> \mu f(z_{\lambda}(x^*(\lambda))) + \int \int_k \big( V(x^*(\lambda) - z_{\lambda}(x^*(\lambda)) + k,w)  \nonumber \\
& \qquad \qquad \qquad  - V(x^*(\lambda)+ k,w) \big) \xi(dk) q(\mu, dw) \nonumber \\
&\geq^{*1} \mu f(z_{\lambda}(x^*(\lambda))) \nonumber \\
& \qquad \qquad \qquad  + \int \int_k \big( V(x^*(\lambda')- z_{\lambda}(x^*(\lambda)) + k,w) \nonumber \\
& \qquad \qquad \qquad - V(x^*(\lambda')+ k,w) \big) \xi(dk) q(\mu, dw) \nonumber \\
&\geq  \min_{z} \big[ \mu f(z) + \int \int_k \big( V(x^*(\lambda') - z + k,w) \nonumber \\
& \qquad \qquad \qquad - V(x^*(\lambda')+ k,w) \big) \xi(dk)  q(\mu, dw) \big]  \nonumber \\
&= \lambda'. \nonumber
\end{align}
Here ($*1$) follows from Lemma~\ref{inc_trans}, since $x^*(\lambda) < x^*(\lambda')$. However, this leads to a contradiction.
Therefore $x^*(\lambda)$ is a decreasing function of $\lambda$ for a fixed channel state $\mu$. The set of passive states for $\lambda$ is given by $ [0,x^*(\lambda)]$. Since $x^*(\lambda)$ is a decreasing function of $\lambda$, we have that the set of passive states monotonically decreases to $\phi$ as $\lambda \uparrow \infty$. This shows Whittle indexability.
\end{proof}

\subsection{Computation of the Whittle index}

We sketch now an algorithm for computation of the Whittle index for each threshold $x$ and channel state $\mu$.
Recall that the dynamic programming equation for an individual queue is given by:
\begin{align}
V_\lambda(x,\mu) = &\min_{u \in \{0,1\}, z\in [0,x]} \Big[ Cx + u \mu f(z) + (1-u) \lambda - \beta \nonumber \\
&  + \int \int_k V_\lambda(x - uz + k ,w) \xi(dk) q(\mu ,dw)\Big] \label{DPindividual}
\end{align}
where we have rendered explicit the $\lambda$-dependence of $V$. The Whittle index is computed using the following set of equations:
\begin{align}
V^{n+1}(x,\mu) &= Cx + \min \big[ \min_{0 \leq z \leq x}( \mu f(z) \nonumber \\
&+ \int \int_k  V^{n}(x - z + k ,w) \xi(dk) q(\mu, dw)),  \nonumber \\
&  \lambda_n(x,a) + \int \int_k V^{n}(x + k ,w) \xi(dk) q(\mu, dw) \big] \nonumber \\
& - V^n(x_0,\mu_0), \label{RVI} \\
\lambda_{n+1} (x,\mu) &= \lambda_{n}(x,\mu) + \gamma \big[ \min_{0 \leq z \leq x}( \mu f(z) \nonumber \\
&+ \int \int_k V^{n}(x - z + k ,w) \xi(dk) q(\mu ,dw)) \nonumber \\
& -\lambda_n(x,\mu) - \int \int_k V^{n}(x + k ,w) \xi(dk) q(\mu, dw) \big], \label{SA}
\end{align}
where $x_0, \mu_0$ are fixed choices as before, and $\gamma > 0$ is a small step-size or `learning parameter'.

If $\lambda_n \equiv$ a constant, (\ref{RVI})  is simply the classical relative value iteration for solving average cost dynamic programming equations \cite{Puterman}. The way to analyze the joint scheme (\ref{RVI})-(\ref{SA}) is to view it as a two time scale algorithm (\cite{book}, Chapters 6,9). Thus the iteration (\ref{RVI}) takes place on the `natural' time scale defined by the iteration index $n = 0, 1, 2, \cdots$, whereas iteration (\ref{SA}) is an incremental adaptation scheme which evolves on a much slower time scale $m = 0, \gamma, 2\gamma, \cdots$. The latter can be viewed as a constant stepsize stochastic approximation algorithm. Using the arguments of \cite{book}, pp.\ 113-115, we can view (\ref{SA}) as quasi-static, i.e., $\lambda_n \approx$ a constant in  order to analyze (\ref{RVI}). Then it is a classical relative value iteration scheme which converges to the value function $V$ of (\ref{DPindividual}) corresponding to $V(x_0, \mu_0) = \beta$, which renders it unique. What this translates into is that $V^n$ tracks $V_{\lambda_n}$, i.e.,
$$\|V^n - V_{\lambda_n}\| \approx 0$$
for small $\gamma$ and sufficiently large $n$. This allows us to view (\ref{SA}) itself as an approximate  discretization  (approximate because of the additional error $V^n - V_{\lambda_n}$) of the ordinary differential equation (ODE)
\begin{eqnarray}
\lefteqn{\dot{\lambda}_t (x,\mu) =  - \lambda_t(x, \mu) + \min_{0 \leq z \leq x}[ \mu f(z)  } \nonumber \\
&&+ \ \int \int_k V_{\lambda_t}(x - z + k ,w) \xi(dk) q(\mu ,dw)] \nonumber \\
&& - \ \int \int_k V_{\lambda_t}(x + k ,w) \xi(dk) q(\mu, dw), \label{ode}
\end{eqnarray}
 where $\lambda_t(x,\mu)$ is the $(x,\mu)$-th component of $\lambda_t$. This is  a scalar ODE of the form
$$\dot{\lambda}_t = F(\lambda_t) - \lambda_t$$
where for $\lambda = [[\lambda(x,\mu)]]$, the $(x,\mu)$th component of $F$ is given by
\begin{eqnarray*}
&& min_{0 \leq z \leq x}[ \mu f(z) +  \int \int_k V_{\lambda_t}(x - z + k ,w)\times \\
&&\xi(dk) q(\mu ,dw)] - \int \int_k V_{\lambda_t}(x + k ,w) \xi(dk) q(\mu, dw).
\end{eqnarray*}
By Lemmas \ref{inc_diff} and \ref{lambda}, the function $F$ is continuous monotone decreasing. Thus (\ref{ode}) will have a unique stable equilibrium to which it must converge.
The iterates $\{\lambda_n\}$ then converge to a small neighborhood of this equilibrium by Theorem 1, p.\ 339, of \cite{Hirsch}. The equilibrium is characterized by setting $F(\lambda) = \lambda$, whence it is seen that it is precisely the Whittle index for the pair $(x, \mu)$.

 To calculate the number of packets transmitted by an active user, we use the equation:
\begin{align}
z^*(x,\mu)  &= \underset{z \in [0,x]}{\operatorname{argmin}} \big[ \mu f(z) \nonumber \\
&+ \int \int_k V^*(x - z + k ,w) \xi(dk) q(\mu, dw) \big], \label{optargmin}
\end{align}
where $V^*(x) := V_{\lambda(x,\mu)}(x)$. Recall that this transmission occurs at each time for exactly one process, viz., that with the lowest Whittle index. Just as the choice of active bandit based on the Whittle indices is a heuristic, so is this choice of the number of packets to be transmitted, and needs some justification. Before we do so, observe that the Whittle index policy for bandit selection compares current Whittle indices across the bandits, thereby introducing a dependence among the processes: they are no longer decoupled, although the computation to arrive at the policy treated them as such. For the obvious computational advantages of such `decoupled thinking' to be retained, one must come up with a heuristic for choosing the number of packets transmitted to respect such decoupling. The most naive choice would be to use the optimal choice thereof given by the single agent problem analyzed in \cite{Agarwal}. But unlike the single agent problem, the individual chains do not, or rather, are not allowed to, transmit except when the corresponding Whittle index wins over the others. This leads to serious under-performance. Intuition suggests that when they do transmit, they should transmit more than what the single agent optimal policy suggests. Clearly the Whittle index has to step in, being a handy function of individual states that  couples the processes. This is what the above heuristic does. Let $\beta^*(x) = \beta(\lambda(x,\mu))$. The definition of Whittle index then leads to the following equation:
\begin{align*}
V^*(x) &= \underset{z \in [0, x]}{\operatorname{min}}\Big[Cx + \mu f(x) -  \beta^*(x) + \\
& \int \int V^*(x - z + k, w)\xi(dk)q(\mu, dw)\Big].
\end{align*}
This amounts to an MDP where a state-dependent subsidy $\beta^*(x)$ is offered in a manner that the average optimal cost is zero. Then clearly the optimal number of transmissions will be higher. Thus our heuristic automatically pegs the latter choice at a higher number to compensate for zero transmission in passive states.

\section{Simulations}
In this section, we evaluate the performance of the proposed Whittle index based algorithm and compare it with those of the Max-Weight Scheduling and Weighted Fair Queuing  strategies via simulations. We describe the above two strategies in Section~\ref{section_other_policies} and present the simulation model and results in Section~\ref{section_simulations}.
\subsection{Max-Weight Scheduling and Weighted Fair Queuing Strategies}
\label{section_other_policies}

\subsubsection{Max-Weight Scheduling}
The Max-Weight Scheduling strategy has been extensively used in prior work, \emph{e.g.}, in the context of resource allocation in wireless networks~\cite{RF:georgiadis:resource},~\cite{RF:Tassiulas:stability},~\cite{Tassiulas} and scheduling in input-queued switches~\cite{RF:mckeown:input:queued:switches}. In this strategy, in each time slot $n$, the channel is allocated to the queue with the largest number of packets, \emph{i.e.}, to queue $l_n = \underset{i}{\operatorname{argmax}} \: X_n^i$, where $X_n^i$, $i \in \{1, \ldots, L\}$, is the number of packets in queue $i$ in time slot $n$.

\subsubsection{Weighted Fair Queuing (WFQ)}
The WFQ policy is a  router link-scheduling discipline that is widely used in communication networks~\cite{Virtual_Time}. Informally, under this policy, in any sufficiently long time interval in which queue $i$ is non-empty, it is guaranteed to be selected for transmission in at least a fraction $\frac{w_i}{\sum_{j=1}^L w_j}$ of the time slots, where $w_i$ is the \emph{weight} of queue $i$; see~\cite{Virtual_Time} for a formal description of the WFQ policy. In our simulations, the weight assigned to queue $i$ is its holding cost, \emph{i.e.}, $w_i = C^i$.
\newline
\newline
The number of packets which are transmitted once a queue is selected, for both the Max-Weight policy as well as the Weighted Fair Queuing policy is given by:
\begin{align}
z^*(x,\mu) &= \underset{z \in [0,x]}{\operatorname{argmin}} \big( \mu f(z) \nonumber \\
&+ \int \int_k V^*(x - z + k ,w) \xi(dk) q(\mu, dw) \big)
\end{align}

\subsection{Simulation Model and Results}
\label{section_simulations}
In our simulations, we use the model described in  Section~\ref{section_model}; throughout, we use the values $M = 50$ and $L = 3$. We focus on the case where $f^i(z) = f(z)$, $i \in \{1,2,3\}$; also, we study the cases where $f(z)$ is exponential ($f(z)$ = $2^z - 1$) and quadratic ($f(z)$ = $kz^2$). We assume that $\forall i \in \{1,2,3\}$ and $n = 1, 2, 3, \ldots$, the channel state $\mu_n^i$ can take two possible values: $1$ (good) and $2$ (bad), and that the transition kernel for each channel is the same and is  given by:
\[
q(\cdot| \cdot)=
  \begin{bmatrix}
    0.7 & 0.3 \\
    0.3 & 0.7
  \end{bmatrix}.
\] Also, in our simulations, the average cost (objective function) is given by:
\begin{align}
\frac{1}{T} \sum_{t=0}^{T} \sum_{i = 1}^{L}\mathbb{E}\big[ C^i X^i_t + \delta U^i_t \mu^i_t f(Z^i_t) \big], \label{EQ:simulations:average:cost}
\end{align}
where $\delta$ is a parameter that can be set so as to assign different weights to the holding cost and the transmission cost.

\begin{figure}[h!]
\begin{center}
\includegraphics[angle=0,scale=0.6]{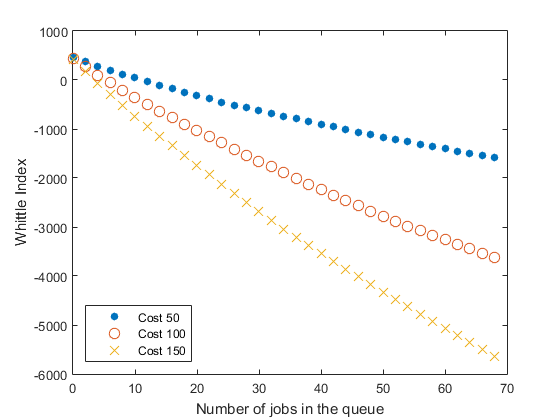}
\caption{Whittle Index for costs 10, 20 and 30 with $f(z) = 2^z - 1$. Channel state $\mu = 1$}
\label{fig:Whittle_Index}
\end{center}
\end{figure}

Let $\Lambda_i = 1$, $i \in \{1,2,3\}$. First, for each of the holding cost values $C = 10, 20$ and $30$, Figure~\ref{fig:Whittle_Index} shows the Whittle index $\lambda(x,\mu)$ versus the queue length $x$.  We see that $\lambda(x,\mu)$ is decreasing in the queue length $x$ for each value of $C$. Also, for each value of $x$, the higher the cost $C$, the lower is the Whittle index value $\lambda(x,\mu)$. The above trends can be interpreted as follows. In the proposed Whittle index based algorithm,
we select the queue $i$ with the lowest value of $\lambda_i(x)$ for transmission. But by the above trends, this results in selection of a queue $i$ with a large queue length $x$ and/ or cost $C^i$, which is consistent with intuition given that our objective is to minimize the cost in (\ref{EQ:simulations:average:cost}).

Next, we compare the performance of the proposed Whittle index based algorithm with those of the Max-Weight Scheduling and WFQ strategies (see Section~\ref{section_other_policies}) in terms of the average cost in \eqref{EQ:simulations:average:cost}.
In Figures~\ref{fig:Cost_Exp_Similar} and~\ref{fig:Cost_Exp_Different}, we have plotted this average cost against the time slot number for the case where the transmission costs are exponential. The holding costs $C^1$, $C^2$ and $C^3$ are $10$, $20$ and $30$ respectively for Figure~\ref{fig:Cost_Exp_Similar} and $10$, $20$ and $500$ respectively for Figure~\ref{fig:Cost_Exp_Different}. It can be seen that in both the figures, the Whittle index based algorithm outperforms the other two strategies. In Figure~\ref{fig:Cost_Exp_Similar}, for which the holding costs ($10$, $20$ and $30$) are close to each other, the Max-Weight Scheduling algorithm performs better than the WFQ algorithm, whereas in Figure~\ref{fig:Cost_Exp_Different}, where there are large differences between the holding costs ($10$, $20$ and $500$), the converse is true. Intuitively, this is because WFQ takes the holding costs into account (through the weight assigned to each queue) and hence prevents the accumulation of a large number of packets (which would result in a high average cost) in the queue with holding cost $500$ resulting in better performance than Max-Weight Scheduling in the scenario of Figure~\ref{fig:Cost_Exp_Different}; on the other hand, in the scenario of Figure~\ref{fig:Cost_Exp_Similar}, the benefit from taking holding costs into account is less because the holding costs of the three queues are close to each other and here, Max-Weight Scheduling outperforms WFQ since the former does not let the size of any queue grow too large. Similar trends can be observed in Figures~\ref{fig:Cost_Quad_Similar} and~\ref{fig:Cost_Quad_Different}, which are for the case where the transmission costs are quadratic.

\begin{figure}[h!]
\begin{center}
\includegraphics[angle=0,scale=0.6]{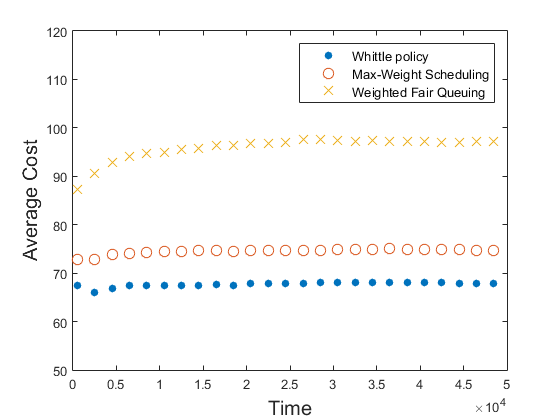}
\caption{Average cost comparison for costs 10, 20 and 30 with exponential transmission costs}
\label{fig:Cost_Exp_Similar}
\end{center}
\end{figure}

\begin{figure}[h!]
\begin{center}
\includegraphics[angle=0,scale=0.6]{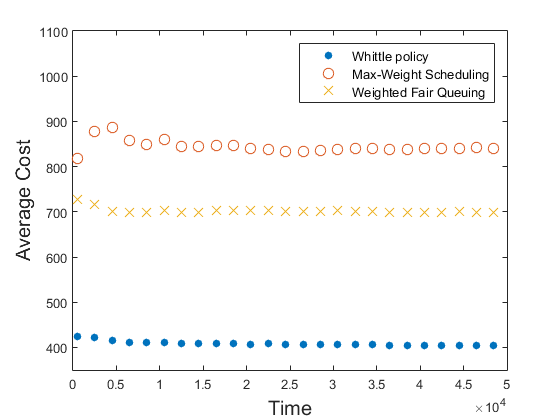}
\caption{Average cost comparison for costs 10, 20 and 500 with exponential transmission costs}
\label{fig:Cost_Exp_Different}
\end{center}
\end{figure}

\begin{figure}[h!]
\begin{center}
\includegraphics[angle=0,scale=0.6]{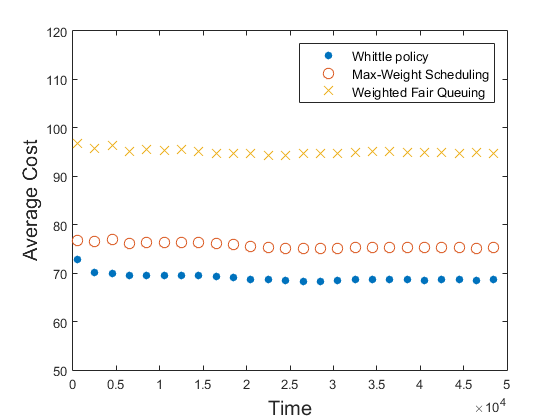}
\caption{Average cost comparison for costs 10, 20 and 30 with quadratic transmission costs}
\label{fig:Cost_Quad_Similar}
\end{center}
\end{figure}

\begin{figure}[h!]
\begin{center}
\includegraphics[angle=0,scale=0.6]{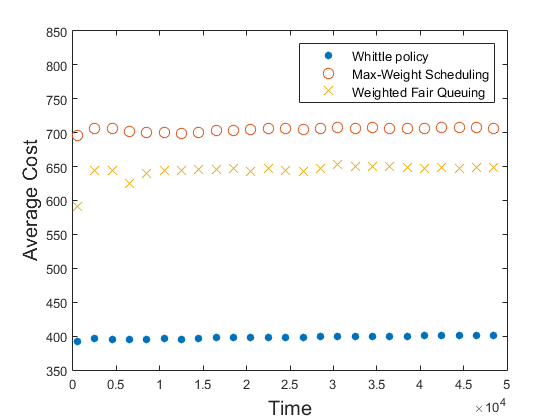}
\caption{Average cost comparison for costs 10, 20 and 500 with quadratic transmission costs}
\label{fig:Cost_Quad_Different}
\end{center}
\end{figure}

In Figure~\ref{fig:Packets_Dropped}, we have plotted the average number ( averaged over time) of packets dropped from the system (from all the three queues) against the input arrival rate for the three algorithms. It can be seen that the  Max-Weight policy drops the least number of packets, which is consistent with intuition since it selects the longest queue for transmission in each time slot, and hence keeps a check on the length of the longest queue. Also, we see that the Whittle index based algorithm performs better than WFQ in terms of the number of packets that are dropped.

\begin{figure}[h!]
\begin{center}
\includegraphics[angle=0,scale=0.6]{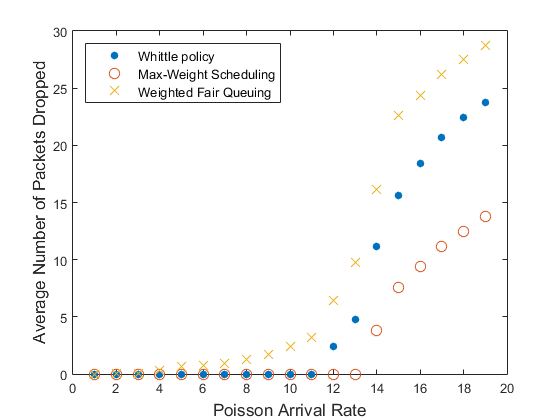}
\caption{Average number of packets dropped in the three scheduling strategies}
\label{fig:Packets_Dropped}
\end{center}
\end{figure}

\section{Conclusions}
\label{section_conclusions}

We have cast the problem of opportunistic scheduling as a restless bandit problem in the classic framework laid down by Whittle, with an additional twist that it combines another ongoing optimization, that over number of packets transmitted, over and above the bandit selection. Thus it is a `controlled' restless bandit problem. We prove Whittle indexability of this problem and propose a numerical scheme for computing Whittle indices. It would be good to have an explicit expression for Whittle indices, but that issue remains open for the moment. The index policy is empirically found to outperform some natural heuristics. Although the Whittle  heuristic is a major saving in complexity over the original problem formulation with a per stage constraint, the computational scheme for  obtaining Whittle indices still remains a cumbersome exercise. An important future direction is to explore the possibility of exploiting techniques from reinforcement learning for approximate dynamic programming for the purpose \cite{Chadha}.

Another interesting and important problem is a theoretical analysis of our heuristic for number of packets to be transmitted when active. While intuitively appealing, we do not have a rigorous justification for it at present.

\end{document}